\documentclass[11pt,letterpaper]{article}

\usepackage[margin=1in]{geometry}
\usepackage[T1]{fontenc}
\usepackage[utf8]{inputenc}
\usepackage{lmodern,microtype}
\usepackage{amsmath,amssymb,amsthm,mathtools}
\usepackage{algorithm,algpseudocode}
\usepackage{enumitem,booktabs}
\usepackage[numbers,sort&compress]{natbib}
\usepackage[hidelinks,hypertexnames=false]{hyperref}
\usepackage{cleveref}
\usepackage{needspace,aliascnt}

\setlist{leftmargin=*,itemsep=3pt,topsep=5pt}
\newtheorem{theorem}{Theorem}[section]

\newaliascnt{lemma}{theorem}
\newtheorem{lemma}[lemma]{Lemma}
\aliascntresetthe{lemma}
\crefname{lemma}{lemma}{lemmas}
\Crefname{lemma}{Lemma}{Lemmas}
\newaliascnt{proposition}{theorem}
\newtheorem{proposition}[proposition]{Proposition}
\aliascntresetthe{proposition}
\crefname{proposition}{proposition}{propositions}
\Crefname{proposition}{Proposition}{Propositions}
\newaliascnt{corollary}{theorem}
\newtheorem{corollary}[corollary]{Corollary}
\aliascntresetthe{corollary}
\crefname{corollary}{corollary}{corollaries}
\Crefname{corollary}{Corollary}{Corollaries}
\theoremstyle{definition}
\newaliascnt{definition}{theorem}
\newtheorem{definition}[definition]{Definition}
\aliascntresetthe{definition}
\crefname{definition}{definition}{definitions}
\Crefname{definition}{Definition}{Definitions}
\theoremstyle{remark}
\newaliascnt{remark}{theorem}
\newtheorem{remark}[remark]{Remark}
\aliascntresetthe{remark}
\crefname{remark}{remark}{remarks}
\Crefname{remark}{Remark}{Remarks}
\newaliascnt{example}{theorem}

\aliascntresetthe{example}
\crefname{example}{example}{examples}
\Crefname{example}{Example}{Examples}
\newcommand{\HH}{\mathcal H}
\newcommand{\PP}{\mathcal P}
\newcommand{\QQ}{\mathbb Q}
\newcommand{\OPTu}{\operatorname{OPT}_{\mathrm U}}
\newcommand{\OPTc}{\operatorname{OPT}_{\mathrm{CVD}}}
\newcommand{\cost}{\omega}
\newcommand{\alg}{\textnormal{\textsc{UltraDelete}}}
\newcommand{\cvd}{\textnormal{\textsc{ClusterDelete}}}
\newcommand{\Ffour}{F_4}
\newcommand{\Ffive}{F_5}

\title{A $5/2$-Approximation for Weighted
Ultrametric \\ Embedding with Outliers}
\author{Chenglin Fan}
\date{}
\hypersetup{pdftitle={A 5/2-Approximation for Weighted Ultrametric Embedding with Outliers},pdfauthor={Chenglin Fan}}

\begin{document}
\maketitle

\begin{abstract}
Weighted ultrametric embedding with outliers asks for a minimum weight set of points whose deletion makes the remaining metric an ultrametric, equivalently, leaves no triple with a unique largest distance. We give a deterministic $5/2$-approximation using $O(n^5)$ arithmetic and comparison operations, improving the previous factor $3$ for arbitrary nonnegative vertex weights. Approximating the problem within any factor strictly below $2$ is UGC-hard.
\end{abstract}

\section{Introduction}
\label{sec:intro}

An ultrametric is a metric in which the largest distance in every triangle is attained at least twice; equivalently, for every distance threshold, being closer than that threshold partitions the points into clusters. We study the problem of restoring this structure by deleting points. The input is a finite metric space $(X,\rho)$, with $n=|X|$, and a nonnegative weight $w(v)$ on each $v\in X$. The goal is to find a minimum weight set $A\subseteq X$ such that $(X\setminus A,\rho)$ is an ultrametric. We call this problem \emph{Weighted Ultrametric Embedding with Outliers}. Since distances between retained points are unchanged, this is equivalently the problem of finding a maximum weight subset that embeds isometrically into an ultrametric.

Sidiropoulos et al.~\cite{sww2017} gave an $O(n^2)$ time $3$-approximation for the unweighted problem, proved NP-hardness, and showed UGC-based hardness below factor $2$. The same factor-$3$ baseline extends to arbitrary nonnegative weights by viewing violating triples, those with a unique largest distance, as three-element hitting constraints and applying local ratio~\cite{localratio2004}. The obstacle to improving this bound is that violations can occur at different scales: a threshold graph may capture one scale while hiding smaller-scale violations inside cliques. We show that two local configurations control this interaction. Removing them by local ratio leaves independent single-scale components, each reducible to Weighted Cluster Vertex Deletion.

\subsection{Results}
\label{subsec:results}

Write $w(A)=\sum_{v\in A}w(v)$, and let $\OPTu(X,w)$ denote the
minimum weight of a deletion set leaving an ultrametric. The distances
$\rho$ are understood from the instance.

\begin{theorem}[Main result]
\label{thm:main}
There is a deterministic polynomial time $5/2$-approximation algorithm for
Weighted Ultrametric Embedding with Outliers. Given an $n$-point rational
metric space $(X,\rho)$ and nonnegative rational weights $w:X\to\QQ_{\ge0}$,
it returns
$A\subseteq X$ such that $(X\setminus A,\rho)$ is an ultrametric and
\[
 w(A)\le \frac52\,\OPTu(X,w).
\]
An implementation uses $O(n^5)$ arithmetic and comparison operations and
has polynomial bit complexity in the binary input length.
\end{theorem}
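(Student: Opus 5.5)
We follow the route sketched in the introduction: a local-ratio phase that removes two bounded-size forbidden configurations, then a reduction of what remains to Weighted Cluster Vertex Deletion, for which a $5/2$-approximation is available (we write $\OPTc$ for its optimum). The natural single-scale structure is the threshold graph. For a threshold $t$, let $G_t$ be the graph on $X$ with $uv\in E(G_t)$ iff $\rho(u,v)<t$. A triple is violating exactly when, for $t$ equal to its largest distance, it induces a $P_3$ in $G_t$: two short edges and one missing edge. Hence $X\setminus A$ is an ultrametric iff $G_t[X\setminus A]$ is a cluster graph for every threshold $t$ among the $O(n^2)$ distinct distances. So $\OPTu$ is a simultaneous cluster vertex deletion across a nested family of graphs.

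\textbf{Step 1 (local ratio on $\Ffour$ and $\Ffive$).} I would identify two small configurations, presumably on four and five points, that couple violations at different scales. Typical examples are a violating triple whose short side hides a smaller-scale violating triple, or two interleaved $P_3$'s at distinct thresholds. Each such configuration $F$ should satisfy a local ratio certificate: every feasible solution hitting all violations inside $F$ has weight at least $\frac{2}{5}\sum_{v\in F}\varepsilon$. Equivalently, any feasible solution contains at least a $2/5$ fraction of the points of $F$. For a $4$-set this means 2 of 4 points, and for a $5$-set 2 of 5 points. I would choose $\Ffour$ and $\Ffive$ precisely as configurations in which every hitting set of the violating triples has size at least $2$. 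While such a configuration exists among the points of positive residual weight, subtract $\varepsilon=\min_{v\in F} w(v)$ from each $v\in F$ and put the points that reach zero weight into the solution. The standard local ratio argument then charges each step at ratio $|F|/2\le 5/2$. Enumerating the 5-subsets is what gives the $O(n^5)$ count.

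\textbf{Step 2 (structure after elimination).} Let $Y$ be the set of points of positive remaining weight, which is $\Ffour$- and $\Ffive$-free. I would prove a structural lemma: the violating triples of $Y$ decompose into components, each living at a single scale. More precisely, take the hierarchical clustering given by the connected components of the $G_t[Y]$. Within each component there is one threshold $t$ such that all violations are $P_3$'s of $G_t$, and the components interact neither through nesting nor across scales. The argument should be that any violation at a smaller scale inside a clique of a larger-scale threshold graph, or any overlap between violations at two scales, produces an $\Ffour$ or $\Ffive$. I expect this to be the main obstacle. It requires a careful case analysis of how two violating triples sharing one or two points can sit at two thresholds, and choosing $\Ffour$ and $\Ffive$ so that exactly those cases are forbidden, while the ratio condition of Step 1 still holds.

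\textbf{Step 3 (solve components and combine).} On each component, deleting points to kill the violations is exactly Weighted Cluster Vertex Deletion on the corresponding $G_t$. I would run the $5/2$-approximation \cvd{} on it, with residual weights, and take the union of these outputs with the points zeroed in Step 1. Feasibility follows from the single-scale property: after the component deletions, no violation remains at any threshold. For the cost, the residual optimum on $Y$ is the sum over components of $\OPTc$, because the components are independent and every solution restricted to a component is a cluster deletion set there. So the residual solution costs at most $\frac52$ times the residual optimum. The local ratio theorem then combines the Step 1 layers (each $\le 5/2$) with this residual bound, giving $w(A)\le\frac52\OPTu$. Finally, I would check the operation count: $O(n^5)$ for enumerating configurations and applying the weight updates, and polynomial for the component solves. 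Bit complexity stays polynomial because the only new numbers created are differences of input weights.
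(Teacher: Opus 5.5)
\textbf{The gap is between your Step 1 and Step 2.} You commit to \emph{uniform} certificates: you subtract the same $\varepsilon$ from every point of $F$ and get ratio $|F|/2$. So your configurations must have unweighted transversal number at least $2$. The configuration that actually forces two scales to meet is $\{abc,abd,acd\}$, and its common vertex $a$ is a transversal by itself.

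Here is a concrete instance. Take four points with $\rho(a,b)=\rho(a,c)=2$, $\rho(b,c)=3$, $\rho(a,z)=4$, $\rho(b,z)=\rho(c,z)=5$. This is a metric. Its violating triples are exactly $abc$ (scale $3$), $abz$ and $acz$ (scale $5$); $bcz$ is not violating because its maximum is attained twice.
\begin{itemize}
\item Every violation contains $a$, so no subset $F$ admits a uniform certificate of ratio below $3$, and your Step 1 leaves the instance untouched.
\item Yet it is one violation component with two scales.
\item No single threshold captures all violations: $abc$ is an induced $P_3$ of $G_t$ only for $2<t\le 3$, while $abz$ is one only for $4<t\le 5$.
\end{itemize}
So the structural lemma of Step 2 is false for any family of uniformly certifiable configurations. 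The CVD reduction of Step 3 also fails: $\{z\}$ is a cluster deletion set of $G_5$, but it leaves $abc$ violating.

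\textbf{The missing idea is a weighted certificate.} On $\{abc,abd,acd\}$, put cost $2$ on $a$ and cost $1$ on each of $b,c,d$. The total is $5$. Every transversal costs at least $2$: either it contains $a$, or it must meet each of $bc,bd,cd$ and so contains two of $b,c,d$. Use this together with unit costs on $\{abc,abd,cde\}$ (total $5$, no single vertex hits all three triples).

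With these two cores removed, the scale-separation step you left open goes through:
\begin{itemize}
\item Let $S,T$ be intersecting violations with $\sigma(S)<\sigma(T)=r$. In $G_r[S\cup T]$, $S$ is a triangle and $T$ is an induced $P_3$.
\item An outside vertex adjacent to exactly one or exactly two vertices of $S$ produces a copy of $\Ffour$.
\item Otherwise the vertices of $S$ are true twins, so $|S\cap T|=1$. Substituting the other two vertices of $S$ into $T$ yields a copy of $\Ffive$.
\end{itemize}

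The components should be those of the violation hypergraph (triples linked by intersections), not the components of the threshold graphs. Each such component $V_i$ then has one scale $r_i$, and its violations are exactly $\PP_3(G_{r_i}[V_i])$.

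\textbf{Two smaller adjustments.}
\begin{itemize}
\item The coefficient $2$ can halve denominators, so your bit-complexity remark must track factors $2^j$.
\item The $O(n^5)$ bound needs a CVD routine within that budget. The paper uses the $O(n^4)$ factor-$2$ algorithm of Aprile et al.; any factor at most $5/2$ suffices for the ratio.
\end{itemize}
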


The theorem improves the approximation factor from $3$ to $5/2$; the higher running time reflects our focus on the approximation ratio. The algorithm uses two local configurations of violating triples. Writing $abc$ for $\{a,b,c\}$,
\begin{equation}
\label{eq:patterns-intro}
 \Ffour=\{abc,abd,acd\},\qquad
 \Ffive=\{abc,abd,cde\}.
\end{equation}
All displayed labels are distinct. A copy of either configuration is a \emph{core}; the copy need not be induced, and an instance containing neither configuration is \emph{core-free}. In a core-free instance, intersecting violations have the same largest distance, or \emph{scale}. Hence triples linked by intersections form disjoint single-scale components. At the common scale of a component, violating triples are exactly the induced three-vertex paths in the threshold graph, so the residual problem decomposes into independent Weighted Cluster Vertex Deletion (CVD) instances. We prove this exact decomposition in \Cref{sec:structure}.

\subsection{Approach}

Each pattern in~\eqref{eq:patterns-intro} gives a local ratio certificate. For $\Ffour$, assign cost $2$ to the common vertex $a$ and cost $1$ to $b,c,d$; for $\Ffive$, use unit costs. In either case the total cost is $5$, while every set hitting the three displayed triples has cost at least $2$. Whenever a core is present, we subtract the largest multiple of its certificate that keeps all residual weights nonnegative and delete the vertices whose residual weight becomes zero. Each such reduction has ratio $5/2$, and the residual instance is core-free.

For scale separation, take intersecting violations of different scales and let $r$ be the larger scale. In the graph joining pairs at distance below $r$, the smaller-scale triple is a clique and the larger-scale triple an induced three-vertex path. A vertex adjacent to some but not all clique vertices gives an $\Ffour$; otherwise the clique vertices have identical outside neighborhoods, and replacing the unique clique vertex on the path yields an $\Ffive$. Thus different scales cannot meet in a core-free instance. We then apply the weighted CVD $2$-approximation of Aprile et al.~\cite{aprile2023} to the disjoint threshold graphs; local ratio combines this factor $2$ terminal step with the factor $5/2$ core reductions. Appendix~\ref{app:classification} shows that every configuration on at most five vertices admitting such a weighted $5/2$ certificate contains one of the two cores.

\subsection{Related work}

The closest predecessor is Sidiropoulos et al.~\cite{sww2017}, whose factor-$3$ ultrametric algorithm minimizes the number of deleted points; their UGC-based hardness below factor $2$ already holds in the unweighted case. Our weighted factor-$3$ baseline is the direct local ratio bound for the corresponding hitting-set formulation. Cluster Vertex Deletion is another three-element hitting problem: Fiorini et al.~\cite{fiorini2020} combined local ratio with graph structure, and Aprile et al.~\cite{aprile2023} later gave a tight factor-$2$ approximation for general weighted CVD. Our contribution is to reduce the residual metric instance to disjoint single-scale CVD instances. The patterns $\Ffour$ and $\Ffive$ also occur in extremal hypergraph theory~\cite{bollobas1974,keevashmubayi2004}; here they serve as local deletion certificates whose absence forces intersecting metric violations to have the same scale.

A different line of work repairs a distance matrix by editing entries rather than deleting points. Fan et al.~\cite{FanRB22} introduced Metric Violation Distance, which minimizes the number of pairwise distances changed to restore a metric. Cohen-Addad et al.~\cite{cohenaddad2025} and Charikar and Gao~\cite{charikar2024} study related metric and ultrametric fitting objectives. Other outlier-embedding work allows distortion or different target spaces, including nested embeddings~\cite{chawla2024}, probabilistic embeddings into trees~\cite{chawla2026}, and Euclidean embeddings with outliers and distance violations~\cite{bentert2025}. Our objective instead deletes weighted points while preserving every retained distance exactly.

\section{Violations and local cost certificates}
\label{sec:cores}

\subsection{Notation and input model}

The input consists of the full rational distance matrix of $(X,\rho)$ and nonnegative rational vertex weights, all encoded in binary; distinct points have positive distance. For any weight or cost function $q$ and subset $S$ of its domain, write $q(S)=\sum_{v\in S}q(v)$ and let $q|_S$ denote its restriction to $S$. For $S\subseteq X$, let $\rho|_S$ be the restriction to $S\times S$ and abbreviate $(S,\rho|_S)$ as $(S,\rho)$. Sets of at most two points are ultrametric. A three-point set $T\subseteq X$ is \emph{violating}, or \emph{bad}, if one pairwise distance is strictly larger than the other two, and its \emph{scale} is
\[
 \sigma(T)=\max_{u,v\in T}\rho(u,v).
\]
The violating triples form the $3$-uniform \emph{violation hypergraph}
\[
 \HH_X=\{T\in\tbinom X3:T\text{ is violating}\}.
\]
For $Y\subseteq X$, let $\HH_Y$ be the violation hypergraph of $(Y,\rho)$ and $\HH_X[Y]$ the hyperedges of $\HH_X$ contained in $Y$. Since deleting points does not change retained distances,
\begin{equation}
\label{eq:heredity}
 \HH_Y=\HH_X[Y].
\end{equation}
Thus deleting vertices creates neither new violations nor new cores.

A set $A\subseteq X$ is feasible exactly when it is a transversal of $\HH_X$, so
\begin{equation}
\label{eq:opt-definition}
 \OPTu(X,w)=\min\{w(A): A\subseteq X,\ A\cap T\ne\varnothing
                       \text{ for every }T\in\HH_X\}.
\end{equation}
For a $3$-uniform hypergraph $\HH$ on $W$ and nonnegative cost vector $c$, let $\tau_c(\HH)$ be its minimum transversal cost. The \emph{$2$-shadow} of $\HH$ joins two vertices whenever they occur in a common hyperedge; its connected components are the vertex components used below. A component is \emph{nontrivial} if it contains a hyperedge, equivalently if its hyperedges can be linked by a sequence of pairwise intersections; vertices in no hyperedge are isolated.

\subsection{The two cores}

\begin{definition}[Canonical cores]
\label{def:cores}
A Type-I core is a copy of $\Ffour=\{abc,abd,acd\}$ on four distinct vertices.
A Type-II core is a copy of $\Ffive=\{abc,abd,cde\}$ on five distinct vertices.
An instance is \emph{core-free} if neither occurs in its violation
hypergraph. Both types are non-induced patterns.
\end{definition}

On four vertices, any three distinct triples have a common vertex and form a
Type-I core. The three displayed edges of a Type-II core have empty common
intersection. Conversely, three distinct triples with union of size five and
empty common intersection form a Type-II core up to relabeling: the three
pairwise intersections have sizes $2,1,1$, and the two triples meeting in two
vertices determine the labeling in \Cref{def:cores}.

\begin{lemma}[Explicit cost vectors]
\label{lem:core-costs}
If a core is present on $W\subseteq X$, one can assign positive integral costs
$c$ to $W$, and zero cost to $X\setminus W$, so that
\begin{equation}
\label{eq:core-costs}
 c(X)=5,\qquad c(K)\ge2\quad\text{for every feasible deletion set }K.
\end{equation}
For Type I use $(c(a),c(b),c(c),c(d))=(2,1,1,1)$; for Type II use unit costs.
Consequently every set $A\subseteq X$ satisfies
\[
 c(A)\le\frac52\,\OPTu(X,c).
\]
\end{lemma}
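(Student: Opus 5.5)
The plan is to verify the two claims in \eqref{eq:core-costs} directly and then derive the ratio bound from them.

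\textbf{Total cost.} This is immediate. For Type~I the total is $2+1+1+1=5$. For Type~II there are five unit-cost vertices, so the total is again $5$.

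\textbf{Lower bound $c(K)\ge 2$.} Any feasible deletion set $K$ is a transversal of $\HH_X$. In particular it hits the three displayed triples of the core, all of which lie in $\HH_X$. So it suffices to show that every set hitting those three triples has cost at least $2$, and only $K\cap W$ matters.

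For Type~I, split on whether $a\in K$. If $a\in K$, then $c(K)\ge c(a)=2$. If $a\notin K$, then $K$ must hit the pairs $\{b,c\}$, $\{b,d\}$ and $\{c,d\}$. These form a triangle, and a triangle has no vertex cover of size one, so $K$ contains at least two of $b,c,d$. Again $c(K)\ge 2$.

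For Type~II, note that no single vertex lies in all three triples $abc$, $abd$ and $cde$, since their common intersection is empty. With unit costs, $K$ therefore needs at least two vertices of $W$, so $c(K)\ge 2$.

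\textbf{Ratio bound.} Every feasible set has cost at least $2$, so $\OPTu(X,c)\ge 2$. On the other hand, any $A\subseteq X$ satisfies
\[
c(A)\le c(X)=5=\tfrac52\cdot 2\le \tfrac52\,\OPTu(X,c).
\]
The middle inequality uses only that $c\ge 0$.

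There is no real obstacle in this argument. The only point needing care is the Type~I case where $a\notin K$: one must observe that the three triples then reduce to the edges of a triangle on $b,c,d$, and that a single vertex cannot cover all three.
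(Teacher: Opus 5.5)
Your proof is correct and follows the paper's argument essentially step for step. You compute the total cost $5$, split Type~I on whether $a\in K$ (otherwise the triangle $bc,bd,cd$ forces two of $b,c,d$), use the empty common intersection for Type~II, and conclude from $\OPTu(X,c)\ge2$ and $c(A)\le c(X)=5$.
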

\begin{proof}
For Type I, a transversal containing $a$ has cost at least $2$. A transversal
not containing $a$ must meet each of $bc,bd,cd$ and therefore contains at least
two of $b,c,d$, again at cost at least $2$. For Type II, no single vertex meets
all three displayed edges, so every transversal has at least two vertices.
Both vectors have total cost $5$. A feasible metric deletion set must hit
these displayed edges, irrespective of any additional violations. Hence
$\OPTu(X,c)\ge2$, while $c(A)\le5$ for every $A$.
\end{proof}

The vector $c$ is an auxiliary certificate, not the input weight vector.
Suppose $Y\subseteq X$ is the current vertex set and $\cost$ is a positive
residual weight function on $Y$. For a core on $W\subseteq Y$, let $c$ be
its certificate, restricted to $Y$, and set
\begin{equation}
\label{eq:scaling}
 \lambda=\min_{v\in W}\frac{\cost(v)}{c(v)},\qquad
 \cost'=\cost-\lambda c.
\end{equation}
Then $\cost'\ge0$, and at least one vertex of $W$ has zero residual cost.
Deleting such vertices is free with respect to $\cost'$, while their original
cost remains accounted for in the subtracted vector $\lambda c$, as in the
standard local ratio method~\cite{localratio2004}.

\begin{proposition}[Completeness on at most five vertices]
\label{prop:classification}
Let $\HH$ be a $3$-uniform hypergraph on $W$, where $|W|\le5$. There is a
nonzero nonnegative rational cost vector $c$ such that
\[
 c(W)\le\frac52\,\tau_c(\HH)
\]
if and only if $\HH$ contains a Type-I or Type-II core. If neither occurs,
then for every nonnegative cost vector $c$,
\[
 \tau_c(\HH)\le\frac13c(W).
\]
\end{proposition}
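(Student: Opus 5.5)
The equivalence has two directions of very different difficulty. I plan to derive the ``if'' direction from \Cref{lem:core-costs}. The ``only if'' direction will follow from the final assertion, which is the real content.

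\emph{If.} Suppose $\HH$ contains a core on $W'\subseteq W$. Let $c$ be the certificate of \Cref{lem:core-costs} on $W'$, extended by zero to $W$. Every transversal of $\HH$ hits the three displayed triples, so the argument of \Cref{lem:core-costs} gives $\tau_c(\HH)\ge2$. Since $c(W)=5$, we get $c(W)\le\frac52\tau_c(\HH)$, and $c$ is nonzero, nonnegative and rational.

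\emph{Only if, assuming the last claim.} Suppose $\HH$ is core-free and $c\ge0$ is nonzero with $c(W)\le\frac52\tau_c(\HH)$. The last claim gives $c(W)\le\frac52\cdot\frac13c(W)=\frac56c(W)$. This forces $c(W)=0$, a contradiction. So everything reduces to proving $\tau_c(\HH)\le\frac13c(W)$ for core-free $\HH$ on at most five vertices.

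\emph{Main step.} For this I will exhibit a partition $W=P_1\cup P_2\cup P_3$ in which every hyperedge is rainbow, i.e.\ meets each $P_i$ in exactly one vertex. Then each $P_i$ is a transversal, and the cheapest one costs at most $c(W)/3$. If some $\HH$ resists this, I would fall back to a probability distribution over transversals with every vertex covered with probability at most $1/3$; that suffices by averaging.

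The structural input is as follows. Core-freeness means every $4$-set spans at most two hyperedges, since three triples on four vertices form an $\Ffour$. It also means no three hyperedges have union of size five and empty common intersection, which would be an $\Ffive$. On at most five vertices any two triples meet.

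For $|W|\le4$ there are at most two edges. A rainbow partition is easy: if the edges are $abc,abd$, take $\{a\},\{b\},\{c,d\}$.

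For $|W|=5$ I will split on whether some pair has codegree at least $2$.
\begin{itemize}
\item Suppose $abc,abd\in\HH$. Then the other edges lie among the triples through $e$. The triple $cde$ is excluded by $\Ffive$. Triples like $ace$ together with $abc$ or $abd$ must be checked against the $4$-set condition on $\{a,b,c,e\}$ and $\{a,b,d,e\}$, and against further $\Ffive$ copies such as $\{ace,ade,bcd\}$-type triples. I expect this to leave only a few families: sunflowers with kernel $ab$ (use $\{a\},\{b\},\{c,d,e\}$), and configurations with all edges through $a$ with a bounded link graph. For the latter, the link of $a$ is a graph on $\{b,c,d,e\}$, and core-freeness should force it to be bipartite with parts usable as $P_2,P_3$, with $P_1=\{a\}$.
\item If all codegrees are at most $1$, two triples on five vertices meet in exactly one vertex. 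I will show there are at most two edges, or a small star through a common vertex, both handled by the same link-graph idea.
\end{itemize}

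\emph{Main obstacle.} The obstacle is completeness of this finite case analysis. I must verify that every core-free hypergraph on five vertices has all edges through one vertex with bipartite link, or is a kernel-$ab$ sunflower. I would organize this by the maximum degree and the link graph of a maximum-degree vertex, checking that a triangle or an odd structure in the link, or an edge avoiding that vertex, produces an $\Ffour$ or $\Ffive$. If a stray case fails to be rainbow-colorable, the fractional fallback via LP duality on at most ten triples handles it directly.
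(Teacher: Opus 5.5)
Your handling of the equivalence is correct and matches the paper: the `if' direction uses the certificate of \Cref{lem:core-costs}, and the `only if' direction follows from $c(W)\le\frac56c(W)$. Your plan to build a partition in which every edge is rainbow is also the paper's target, since such a partition is exactly a proper $3$-coloring of the $2$-shadow.

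\textbf{The gap.} For $|W|=5$ you never prove that a rainbow partition exists. Two decisive claims are left as expectations (``I expect'', ``should force'', ``I will show''), and you yourself name their completeness as the open obstacle:
\begin{itemize}
\item the codegree-$\ge2$ case leaves only stars through one vertex with bipartite link;
\item the codegree-$\le1$ case has at most two edges.
\end{itemize}
The LP-duality fallback does not fill this. It gives no reason why a distribution over transversals with all marginals at most $1/3$ would exist for a hypothetical non-rainbow hypergraph. As written, the argument is complete only for $|W|\le4$.

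\textbf{How to close it.} The classification you are aiming for is true. The two consequences of core-freeness you listed prove it directly, with no codegree split.
\begin{enumerate}
\item[(i)] Any three edges share a vertex. Three distinct triples inside a $4$-set all contain the vertex that none of them omits. So an empty common intersection forces a union of five vertices, which is an $F_5$.
\item[(ii)] Hence all edges share a vertex $a$. Suppose four edges $E_1,\dots,E_4$ did not, and pick $x_i$ in the three edges other than $E_i$. Then $x_i\notin E_i$, the $x_i$ are distinct, and $E_j=\{x_i:i\ne j\}$. So the four edges lie in one $4$-set and contain an $F_4$. Inductively, if every $k\ge4$ edges share a vertex but some $k+1$ do not, the same witnesses would put $k\ge4$ distinct vertices in one triple, which is impossible.
\item[(iii)] The link of $a$ is a graph on at most four vertices. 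It is triangle-free, because a link triangle $xyz$ yields the $F_4$ formed by $axy,axz,ayz$. It is therefore bipartite, since the only odd cycle on four vertices is a triangle.
\end{enumerate}
Then $\{a\}$ and the two sides of the link form your rainbow partition. Your codegree-$\le1$ claim also has a one-line proof: three edges pairwise meeting in one vertex have union $6+|E_1\cap E_2\cap E_3|>5$.

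\textbf{Comparison with the paper.} The paper shows the shadow is $K_4$-free by a case split on the number of edges inside a shadow $K_4$. It then proves a separate lemma that $K_4$-free graphs on at most five vertices are $3$-colorable. Your completed route instead gives an exact characterization: a hypergraph on at most five vertices is core-free iff all its edges pass through one vertex whose link is triangle-free. It also shows that a single vertex is already a transversal. The paper's route avoids the Helly-type step and stays entirely within graph coloring.
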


If a core is present, its vector from \Cref{lem:core-costs} gives the desired
certificate. For the reverse implication, Appendix~\ref{app:classification} shows
that the $2$-shadow of every core-free hypergraph on at most five vertices
has a proper three-coloring. Each color class then meets every hyperedge,
so one has cost at most $c(W)/3$.

\section{Scale separation and exact graph decomposition}
\label{sec:structure}

We now establish the structural decomposition of a core-free instance using only comparisons among pairwise distances.

\subsection{Threshold graphs}

For $r>0$ and $W\subseteq X$, let $G_r[W]$ be the graph on $W$ defined by
\begin{equation}
\label{eq:threshold}
 uv\in E(G_r[W])\quad\Longleftrightarrow\quad \rho(u,v)<r.
\end{equation}
Write $G_r=G_r[X]$, and let $P_3$ denote the path on three vertices.
An induced $P_3$ has two edges and one nonedge. In a threshold graph, it is
a violating triple: its nonedge has distance at least $r$, while its two
edges have distance below $r$. Conversely, a violating triple of scale
exactly $r$ induces a $P_3$, while a violating triple of smaller scale
induces a triangle. The \emph{closed neighborhood} of a vertex consists of the vertex itself
and all its neighbors. Two distinct vertices are \emph{true twins} if their
closed neighborhoods coincide. They are adjacent and have the same adjacency
to every other vertex, so an induced $P_3$ cannot contain two of them.
Replacing a vertex of an induced $P_3$ by a true twin outside the path
preserves the induced $P_3$.

\subsection{Intersecting violations have the same scale}

\begin{lemma}[Scale separation]
\label{lem:scale-separation}
Let $S,T\in\HH_X$ with $S\cap T\ne\varnothing$ and
$\sigma(S)\ne\sigma(T)$. The induced hypergraph $\HH_X[S\cup T]$ contains a
Type-I or Type-II core.
\end{lemma}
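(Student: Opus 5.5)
We may assume without loss of generality that $\sigma(S)<\sigma(T)=:r$. The plan is to work entirely in the threshold graph $G_r[S\cup T]$, a graph on at most five vertices. There, by the remarks following \eqref{eq:threshold}, $S$ induces a triangle, because all three of its distances are at most $\sigma(S)<r$. Likewise $T$ induces a $P_3$, because its scale is exactly $r$. We will repeatedly use the fact that any induced $P_3$ of $G_r$ is a violating triple, and therefore a hyperedge of $\HH_X[S\cup T]$. So it suffices to exhibit induced $P_3$'s, together with $S$ itself, forming $\Ffour$ or $\Ffive$. Write $S=\{x,y,z\}$.

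\textbf{Case 1: some vertex $v\in T\setminus S$ is adjacent in $G_r$ to some but not all vertices of $S$.} Say $vx$ is an edge and $vy$ is not. Then $\{v,x,y\}$ is an induced $P_3$ with middle vertex $x$, hence violating. We split on the adjacency of $v$ and $z$.
\begin{itemize}
\item If $vz$ is an edge, then $\{v,y,z\}$ is an induced $P_3$ (edges $vz$ and $yz$, nonedge $vy$). The triples $xyz$, $vxy$, $vyz$ all contain $y$ and form a Type-I core with $a=y$.
\item If $vz$ is not an edge, then $\{v,x,z\}$ is an induced $P_3$. The triples $xyz$, $vxy$, $vxz$ form a Type-I core with $a=x$.
\end{itemize}
In both subcases the four vertices are distinct, since $v\notin S$.

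\textbf{Case 2: every vertex of $T\setminus S$ is adjacent to all of $S$ or to none of it.} Since $S$ is a clique, the vertices $x,y,z$ are then pairwise true twins in $G_r[S\cup T]$. An induced $P_3$ cannot contain two true twins, so $|S\cap T|=1$. Say $T=\{x,u,v\}$ with $u,v\notin S$. Replacing $x$ by its true twin $y$, and separately by $z$, preserves the induced $P_3$, so $\{u,v,y\}$ is violating. Then the triples $uvx$, $uvy$, $xyz$ form a copy of $\Ffive=\{abc,abd,cde\}$ under the labeling $a=u$, $b=v$, $c=x$, $d=y$, $e=z$. These five vertices are distinct, so this is a Type-II core.

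The only delicate points are bookkeeping. First, every triple exhibited must be a genuine violating triple of the metric, which holds because each is either $S$ or an induced $P_3$ of $G_r$. Second, the twin argument must be applied only inside $G_r[S\cup T]$. The main potential obstacle is making sure that Case 2 really forces $|S\cap T|=1$, rather than allowing $T$ to share an edge with $S$. This is exactly what the true-twin observation settles.
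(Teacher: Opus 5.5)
Your proof is correct and follows the paper's argument essentially step for step. You use the same threshold graph $G_r[S\cup T]$, the same Type-I split according to whether the outside vertex sees one or two vertices of $S$, and the same true-twin argument forcing $|S\cap T|=1$. The only cosmetic difference is in the Type-II case, where you take $T$ itself together with one twin replacement and $S$, whereas the paper takes both twin replacements together with $S$; either choice gives a valid copy of $\Ffive$.
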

\begin{proof}
Interchange $S$ and $T$ if necessary so that
$\sigma(S)<\sigma(T)=r$, and write $S=\{a,b,c\}$. Put $W=S\cup T$ and
consider $G=G_r[W]$. By the threshold observations above, $G[S]$ is a clique
and $G[T]$ is an induced $P_3$. Every other induced $P_3$ found in $G$ is also
a violating triple of the metric.

Suppose first that some $z\in W\setminus S$ has a nonempty proper
neighborhood in $S$. Up to relabeling $a,b,c$, there are two possibilities.
If $z$ is adjacent only to $a$, then $zab$ and $zac$ are induced $P_3$'s.
Together with the original violating triple $abc$, they form
\[
 abc,\quad abz,\quad acz,
\]
a Type-I core with common vertex $a$. If $z$ is adjacent to $a,b$ but not to
$c$, then $zac$ and $zbc$ are induced $P_3$'s. The triples $abc,acz,bcz$
then form a Type-I core with common vertex $c$.

We may therefore assume that every vertex of $W\setminus S$ is adjacent to
all of $S$ or to none of $S$. This common outside neighborhood, together
with the fact that $S$ is a clique, makes its vertices true twins in $G$.
The induced path $G[T]$ cannot contain two of them, so $|S\cap T|=1$.
Relabel the vertices so that
\[
 S=\{a,b,s\},\qquad T=\{s,d,e\},
\]
where $a,b,s,d,e$ are distinct. Both $a$ and $b$ lie outside $T$ and are
true twins of $s$, so replacing $s$ by either vertex preserves the induced
path. Thus $dea$ and $deb$, together with the original violation $abs$,
form a Type-II core.
\end{proof}

\begin{remark}
The threshold $r$ is the larger violation scale, which may be smaller than
the diameter of $S\cup T$. Distances between the two triples may exceed $r$:
an induced $P_3$ remains violating because its nonedge distance is at least
$r$. True-twin replacement applies whether the shared vertex is an endpoint
or the middle vertex of the path.
\end{remark}

\begin{corollary}[One scale per violation component]
\label{cor:common-scale}
In a core-free instance, each nontrivial component $V_i$ of $\HH_X$ has a
number $r_i$ such that $\sigma(T)=r_i$ for every $T\in\HH_X[V_i]$.
\end{corollary}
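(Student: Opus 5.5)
The corollary should follow from \Cref{lem:scale-separation} by propagating equality of scales along chains of intersecting hyperedges. Fix a nontrivial component $V_i$ of the $2$-shadow of $\HH_X$. First I will check that the hyperedges contained in $V_i$ are linked by chains of pairwise intersections. Every hyperedge is a clique in the $2$-shadow, so it lies entirely inside one component. If two hyperedges $S,T\subseteq V_i$ are given, pick $u\in S$ and $v\in T$ and a shadow path $u=x_0,x_1,\dots,x_m=v$. Each shadow edge $x_{j-1}x_j$ is witnessed by a hyperedge $T_j\ni x_{j-1},x_j$. Consecutive hyperedges in the sequence $S,T_1,\dots,T_m,T$ share a vertex: $S$ and $T_1$ share $x_0$, $T_j$ and $T_{j+1}$ share $x_j$, and $T_m$ and $T$ share $x_m$. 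If $m=0$, then $S$ and $T$ already intersect.

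Next I will show that intersecting hyperedges have equal scale. Suppose $S\cap T\neq\varnothing$ and $\sigma(S)\ne\sigma(T)$. Then \Cref{lem:scale-separation} gives a Type-I or Type-II core in $\HH_X[S\cup T]\subseteq\HH_X$, which contradicts core-freeness. So any two intersecting hyperedges have the same scale.

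Finally, scale equality is transitive along the chain built in the first step, so $\sigma(S)=\sigma(T)$ for all hyperedges $S,T$ in $V_i$. Since $V_i$ is nontrivial, it contains some hyperedge $T_0$, and I will set $r_i=\sigma(T_0)$. The only point that needs any care is the first step, namely that the hyperedges in a shadow component form a single chain-connected family. The argument above handles it directly, and the corollary is otherwise immediate.
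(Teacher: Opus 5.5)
Your proposal is correct and follows the paper's own argument: apply \Cref{lem:scale-separation} to get equal scales for intersecting violations in a core-free instance, then propagate equality along chains of intersecting hyperedges. The only difference is that you explicitly derive chain-connectivity from shadow paths, which the paper takes as part of its definition of a nontrivial component; your derivation is valid.
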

\begin{proof}
By \Cref{lem:scale-separation}, intersecting violating triples have equal
scale. Any two edges in the same nontrivial hypergraph component are joined
by a sequence of intersecting edges. Equality of scales propagates along
this sequence.
\end{proof}

Only the violating triples in a component share the same scale. Pairwise
distances inside the component need not all be equal, and different
components may have different scales.

\subsection{Feasibility-preserving reduction to CVD}

A \emph{cluster graph} is a disjoint union of cliques. For a graph $G$,
let $\PP_3(G)$ be the family of three-vertex sets inducing a $P_3$.
A graph is a cluster graph if and only if $\PP_3(G)=\varnothing$.
Indeed, if a connected component is not a clique, the first three vertices
of a shortest path between two nonadjacent vertices form an induced $P_3$.

Given nonnegative vertex weights $u$, Weighted Cluster Vertex Deletion
(CVD) asks for a minimum weight set $B$ such that $G-B$ is a cluster graph.
Here $G-B$ is the subgraph induced by $V(G)\setminus B$. We write
\[
 \OPTc(G,u)=\min\{u(B):B\subseteq V(G),\ G-B\text{ is a cluster graph}\}.
\]

\begin{theorem}[Exact single-scale decomposition]
\label{thm:decomposition}
Let $(X,\rho)$ be core-free, with nontrivial violation components
$V_1,\ldots,V_m$ and common scales $r_1,\ldots,r_m$. Let
$G_i=G_{r_i}[V_i]$. Then
\begin{equation}
\label{eq:obstruction-equality}
 \HH_X[V_i]=\PP_3(G_i)\qquad (1\le i\le m).
\end{equation}
Consequently, for every $A\subseteq X$,
\begin{equation}
\label{eq:feasibility-equality}
 (X\setminus A,\rho)\text{ is ultrametric}
 \quad\Longleftrightarrow\quad
 G_i-(A\cap V_i)\text{ is a cluster graph for every }i.
\end{equation}
For every nonnegative weight function $w$,
\begin{equation}
\label{eq:opt-decomposition}
 \OPTu(X,w)=\sum_{i=1}^{m}\OPTc(G_i,w|_{V_i}).
\end{equation}
\end{theorem}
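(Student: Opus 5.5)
The plan is to prove the three displayed claims in order, with the obstruction equality \eqref{eq:obstruction-equality} as the main content.

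For the inclusion $\HH_X[V_i]\subseteq\PP_3(G_i)$, take $T\in\HH_X[V_i]$. By \Cref{cor:common-scale}, $\sigma(T)=r_i$. The threshold observations then say that a violating triple of scale exactly $r_i$ induces a $P_3$ in $G_{r_i}$, hence also in the induced subgraph $G_i$.

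For the reverse inclusion, take $\{x,y,z\}\in\PP_3(G_i)$. The threshold observations show it is a violating triple of $X$: two distances lie below $r_i$ and the nonedge has distance at least $r_i$. It lies inside $V_i$, so it belongs to $\HH_X[V_i]$. This direction needs no core-freeness. The delicate point is that $V_i$ is defined as a component of the $2$-shadow, and this direction only requires that the triple's vertices all lie in $V_i$, which holds by definition of $\PP_3(G_i)$. So I expect no real obstacle there. The only genuine input is the common-scale corollary, which is what forbids a violating triple inside $V_i$ of smaller scale appearing as a triangle and being missed.

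For \eqref{eq:feasibility-equality}, use the transversal characterization \eqref{eq:opt-definition}. Every violating triple lies in a single nontrivial component, since its three vertices are pairwise adjacent in the $2$-shadow. Hence $\HH_X=\bigsqcup_i\HH_X[V_i]$. Then $A$ is feasible iff $A$ meets every triple of every $\HH_X[V_i]=\PP_3(G_i)$, iff $A\cap V_i$ meets every induced $P_3$ of $G_i$. By heredity, induced $P_3$'s of $G_i-(A\cap V_i)$ are exactly the members of $\PP_3(G_i)$ avoiding $A$. So this is equivalent to $\PP_3(G_i-(A\cap V_i))=\varnothing$, i.e., to $G_i-(A\cap V_i)$ being a cluster graph, by the stated characterization of cluster graphs.

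For \eqref{eq:opt-decomposition}, argue both inequalities.
\begin{itemize}
\item For $\ge$: given an optimal feasible $A$, each $A\cap V_i$ is a CVD solution for $G_i$. Since the $V_i$ are disjoint and $w\ge0$, we get $w(A)\ge\sum_i w(A\cap V_i)\ge\sum_i\OPTc(G_i,w|_{V_i})$.
\item For $\le$: given optimal CVD solutions $B_i\subseteq V_i$, set $A=\bigcup_i B_i$. Then $A\cap V_i=B_i$ by disjointness, so $A$ is feasible by \eqref{eq:feasibility-equality}, and $w(A)=\sum_i w(B_i)$.
\end{itemize}
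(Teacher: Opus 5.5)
Your proposal is correct and follows essentially the same route as the paper's proof. Both arguments use \Cref{cor:common-scale} plus the threshold observation for $\HH_X[V_i]\subseteq\PP_3(G_i)$ and the threshold observation alone for the reverse inclusion. Both then use the fact that every violating triple lies in a single component to get \eqref{eq:feasibility-equality}, and prove \eqref{eq:opt-decomposition} by restricting a feasible $A$ to the $V_i$ and taking the union of optimal CVD solutions. You are slightly more explicit than the paper on two routine points: that induced $P_3$'s of $G_i-(A\cap V_i)$ are exactly the members of $\PP_3(G_i)$ avoiding $A$, and that $A\cap V_i=B_i$ by disjointness.
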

\begin{proof}
Every violating triple in $V_i$ has one distance equal to $r_i$ and its other
two distances strictly below $r_i$, by \Cref{cor:common-scale}. It therefore
induces a $P_3$ in $G_i$. Conversely, if a triple induces a $P_3$ in $G_i$, its
nonedge distance is at least $r_i$ and its two edge distances are smaller
than $r_i$. The nonedge is the unique largest pair, so the triple belongs to
$\HH_X[V_i]$. This proves~\eqref{eq:obstruction-equality}.

Every hyperedge of $\HH_X$ lies in one of the components $V_i$, by the
definition of a component. Deleting $A$ hits all these hyperedges exactly
when $A\cap V_i$ hits $\PP_3(G_i)$ for every $i$. This is equivalent to each
residual graph being a cluster graph and proves~\eqref{eq:feasibility-equality}.
In particular, triples using different components cause no additional
constraints: a bad triple of that kind would itself join the components.
Vertices in no bad triple may always be retained.

For the optimum equality, any feasible $A$ restricts to feasible CVD solutions
in the disjoint $V_i$, and hence
\[
 w(A)\ge\sum_i w(A\cap V_i)\ge\sum_i\OPTc(G_i,w|_{V_i}).
\]
Conversely, choose optimal CVD solutions $A_i\subseteq V_i$ and let
$A=\bigcup_i A_i$. This union is feasible by~\eqref{eq:feasibility-equality},
and its weight is exactly the right hand side of~\eqref{eq:opt-decomposition}.
\end{proof}

\begin{theorem}[Weighted CVD approximation; Aprile et al.~\cite{aprile2023}]
\label{thm:cvd}
There is a deterministic polynomial time algorithm $\cvd$ that, for a graph
$G$ and nonnegative rational vertex weights $u$, returns $B\subseteq V(G)$
such that $G-B$ is a cluster graph and
\[
 u(B)\le2\,\OPTc(G,u).
\]
The algorithm has an $O(|V(G)|^4)$ arithmetic-operation implementation.
\end{theorem}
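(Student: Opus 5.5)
\Cref{thm:cvd} is not proved here. It is imported from Aprile et al.~\cite{aprile2023}, and the only work needed in this paper is to check that their statement matches ours. First, their problem is the same as ours: find a minimum weight vertex set hitting every induced $P_3$, with nonnegative vertex weights. By the characterization above, $G-B$ is a cluster graph exactly when $\PP_3(G-B)=\varnothing$. Second, rational weights cause no difficulty, since their algorithm uses only additions, subtractions, comparisons and ratios of weights, so every intermediate quantity stays rational with polynomially bounded encoding. Third, their guarantee $u(B)\le2\,\OPTc(G,u)$ is exactly the inequality stated.

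For completeness, I would explain the shape of their argument, to make clear why the factor $2$ and the running time hold. Their method is a local ratio scheme. At each step it exhibits a nonzero nonnegative cost function $y\le u$ supported near a single pivot vertex $v$ and satisfying $y(V)\le2\,\OPTc(G,y)$. It subtracts $y$, deletes the vertices whose residual weight becomes zero, and recurses; the standard local ratio accounting then gives factor $2$. When small configurations (for example, several induced $P_3$'s sharing vertices in a rigid pattern) are present, they supply such $y$ directly, in the same spirit as \Cref{lem:core-costs}. When none are present, fix $v$ and consider solutions that keep $v$. Keeping $v$ forces its cluster to lie in $N[v]$, and hitting the remaining $P_3$'s through $v$ amounts to a vertex cover problem in an auxiliary graph. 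That graph has the nonedges inside $N(v)$ and the edges from $N(v)$ to the second neighborhood, and the local structure makes it bipartite. König's theorem (LP duality for bipartite vertex cover) then produces a fractional packing whose value is used to build the certificate $y$: weight on $v$ plus the dual weights, with the charge to $v$ balancing the cover.

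The main obstacle, which I would not reproduce, is the case analysis proving that a pivot with a bipartite auxiliary graph, or one of the explicit small certificates, always exists. This is the technical heart of \cite{aprile2023}.

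For the running time, each round is dominated by a bipartite vertex cover or matching computation on $O(|V(G)|)$ vertices, together with $O(|V(G)|^3)$ pattern tests. At least one vertex is removed per round, which yields the stated $O(|V(G)|^4)$ arithmetic-operation bound under their implementation. I would cite their analysis for this count rather than rederive it.
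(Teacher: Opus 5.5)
Your proposal takes the same route as the paper, which likewise gives no proof of \Cref{thm:cvd} and imports it from Aprile et al.~\cite{aprile2023}; checking that their problem (hitting every induced $P_3$ under nonnegative rational weights) and their factor-$2$ guarantee match the statement is all that is needed. Do not rely on your optional sketch of their method, however: a certificate with $y(V)\le2\,\OPTc(G,y)$ cannot exist in general, because for $G=P_3$, or any disjoint union of $P_3$'s, every $y$ satisfies $y(V)\ge3\,\OPTc(G,y)$, so their local ratio steps must bound only minimal solutions or some similarly weaker quantity.
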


Zero-weight vertices may be deleted before applying this theorem and added to
the returned set afterward. This does not affect the approximation guarantee.

\begin{corollary}[A $2$-approximation on core-free instances]
\label{cor:corefree-two}
Apply $\cvd$ separately to the graphs in \Cref{thm:decomposition}, retaining
all isolated vertices of the violation hypergraph. The union of its deletion
sets is feasible for the metric and has weight at most $2\OPTu(X,w)$.
\end{corollary}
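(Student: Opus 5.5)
The proof combines \Cref{thm:decomposition} with \Cref{thm:cvd}, using one component at a time. Let $V_1,\ldots,V_m$ be the nontrivial components of $\HH_X$, with common scales $r_1,\ldots,r_m$ from \Cref{cor:common-scale}, and let $G_i=G_{r_i}[V_i]$. Write $I=X\setminus\bigcup_i V_i$ for the isolated vertices of the violation hypergraph; these are retained. For each $i$ we run $\cvd$ on $(G_i,w|_{V_i})$ and obtain $B_i\subseteq V_i$ such that $G_i-B_i$ is a cluster graph and $w(B_i)\le 2\,\OPTc(G_i,w|_{V_i})$. Following the remark after \Cref{thm:cvd}, zero-weight vertices may first be removed and then added back to $B_i$; this is harmless. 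The output is $A=\bigcup_i B_i$.

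For feasibility, note that $A\cap V_i=B_i$, because the $V_i$ are pairwise disjoint and $B_i\subseteq V_i$. Hence $G_i-(A\cap V_i)=G_i-B_i$ is a cluster graph for every $i$. The right-hand side of \eqref{eq:feasibility-equality} therefore holds, and that equivalence shows $(X\setminus A,\rho)$ is an ultrametric. Keeping the vertices of $I$ causes no problem: they lie in no violating triple. The proof of \Cref{thm:decomposition} also shows that no violating triple straddles two components.

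For the cost bound, disjointness gives
\[
 w(A)=\sum_i w(B_i)\le 2\sum_i \OPTc(G_i,w|_{V_i}) = 2\,\OPTu(X,w),
\]
where the last equality is \eqref{eq:opt-decomposition}.

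No step here is a real obstacle, since the substantive work is already done in \Cref{thm:decomposition}. The only points needing care are small ones. First, $A\cap V_i$ must equal $B_i$ exactly, which follows from the disjointness of the components. Second, we need $G_i$ as defined, with the strict threshold $r_i$, to be the graph whose induced $P_3$'s are exactly $\HH_X[V_i]$; this is \eqref{eq:obstruction-equality}. Third, the core-free hypothesis is what makes the scales $r_i$ well defined, and it is carried over from \Cref{thm:decomposition}.
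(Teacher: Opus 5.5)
Your proposal is correct and follows the paper's own proof. Feasibility comes from \eqref{eq:feasibility-equality} applied to $A\cap V_i=B_i$, and the cost bound comes from summing the per-component guarantees of \Cref{thm:cvd} over the disjoint $V_i$ and then invoking \eqref{eq:opt-decomposition}. Your remarks on disjointness and zero-weight vertices only make explicit what the paper's two-line proof leaves implicit.
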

\begin{proof}
Feasibility follows from~\eqref{eq:feasibility-equality}. Since the components
are disjoint, the costs add, and \Cref{thm:cvd} bounds their sum by twice the
right hand side of~\eqref{eq:opt-decomposition}.
\end{proof}

The decomposition can be built by scanning all $\binom n3$ triples. For each
violation, join its three vertices in an auxiliary graph and record its scale.
The connected components give the sets $V_i$, one stored violation gives each
$r_i$, and a final pair scan constructs $G_i$.

\section{Algorithm and approximation analysis}
\label{sec:algorithm}

\subsection{A single pass over candidate cores}

The algorithm first performs local reductions and then solves the residual core-free instance by graph decomposition. It maintains an active set $Y$, deleted set $D=X\setminus Y$, and residual costs $\cost$, initially $\cost=w$ with all zero-weight vertices placed in $D$. Fix an order on $X$ and enumerate the $O(n^5)$ injective labelings of the Type-I and Type-II templates. A candidate is used only when all its vertices are active and its three specified triples are violating; applying \Cref{lem:core-costs} and~\eqref{eq:scaling} then makes at least one residual weight zero, and all such vertices are removed. The enumeration is performed once: by~\eqref{eq:heredity} deletions create no new core, while every used candidate immediately loses a vertex, so an examined candidate cannot reappear.

\begin{algorithm}[H]
\caption{$\alg(X,\rho,w)$}
\label{alg:main}
\begin{algorithmic}[1]
\Require Rational metric $(X,\rho)$ and weights $w:X\to\QQ_{\ge0}$
\Ensure A deletion set $A$ leaving an ultrametric
\State $\cost\gets w$; $D\gets\{v\in X:w(v)=0\}$; $Y\gets X\setminus D$
\For{each injective Type-I or Type-II labeling on $X$, in a fixed order}
  \State Let $W$ be its vertex set and $T_1,T_2,T_3$ its required triples
  \If{$W\subseteq Y$ and $T_1,T_2,T_3\in\HH_X$}
    \State Let $c$ be its cost vector from \Cref{lem:core-costs}, zero outside $W$
    \State $\lambda\gets\min_{v\in W}\cost(v)/c(v)$
    \State $\cost(v)\gets\cost(v)-\lambda c(v)$ for every $v\in W$
    \State $Z\gets\{v\in W:\cost(v)=0\}$
    \State $D\gets D\cup Z$; $Y\gets Y\setminus Z$
  \EndIf
\EndFor
\State Find the nontrivial components $V_1,\ldots,V_m$ of $\HH_Y$
\For{$i=1,\ldots,m$}
  \State Let $r_i$ be the scale of any violating triple in $V_i$
  \State Construct $G_i$ on $V_i$ with $uv\in E(G_i)$ iff $\rho(u,v)<r_i$
  \State $B_i\gets\cvd(G_i,\cost|_{V_i})$
\EndFor
\State \Return $D\cup\bigcup_{i=1}^{m}B_i$
\end{algorithmic}
\end{algorithm}

All tests use the original distance matrix. Candidates can be streamed, so
there is no need to store all cores or the full violation hypergraph. Multiple
labelings of the same core affect only the constant factor. With a fixed
enumeration order and deterministic CVD calls, the algorithm is deterministic.

\subsection{Feasibility and progress}

\begin{lemma}[Reduction invariant]
\label{lem:invariant}
Throughout the local-reduction phase, $\cost$ is nonnegative, it vanishes on
$D$, and it is strictly positive on $Y$. Each used core removes at least one
active point. At the end of the enumeration, $(Y,\rho)$ is core-free.
\end{lemma}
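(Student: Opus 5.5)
We prove the three assertions by induction over the iterations of the enumeration loop in \Cref{alg:main}, with the invariant checked after the initialization and after each executed iteration. At initialization $\cost=w$, and $D$ is exactly the set of zero-weight vertices, so $\cost$ vanishes on $D$ and is strictly positive on $Y=X\setminus D$. An iteration whose test fails changes nothing.

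For an iteration that uses a core on $W\subseteq Y$, the inductive hypothesis gives $\cost(v)>0$ for all $v\in W$. Since $c(v)\in\{1,2\}$ is positive on $W$, the value $\lambda=\min_{v\in W}\cost(v)/c(v)$ is well defined and strictly positive. For each $v\in W$ we have $\lambda c(v)\le\cost(v)$, so the new value $\cost(v)-\lambda c(v)$ is nonnegative. Equality holds for any minimizer of the ratio, so $Z\neq\varnothing$. Vertices outside $W$ are untouched, which keeps the values on $D$ at zero and the values on $Y\setminus W$ positive. The vertices of $W$ whose new cost is zero are exactly $Z$, and they move to $D$. The remaining vertices of $W$ have strictly positive cost and stay in $Y$. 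Hence the invariant is restored, and the used core removes $|Z|\ge1$ active points.

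The only delicate part is the final core-freeness claim. Suppose, for contradiction, that after the loop $\HH_Y$ contains a Type-I or Type-II core. By \eqref{eq:heredity}, $\HH_Y=\HH_X[Y]$, so there are a vertex set $W\subseteq Y_{\mathrm{final}}$ and an injective labeling of one of the templates whose three triples lie in $\HH_X$. Every injective labeling of every core was enumerated at some iteration. Let $Y_t$ be the active set at that iteration. Active sets only shrink, so $W\subseteq Y_{\mathrm{final}}\subseteq Y_t$. The triples are violating in $\HH_X$ regardless of deletions, so the test succeeded and the core was used. The progress statement then gives $Z\subseteq W$ with $Z\neq\varnothing$ removed from $Y$ at that iteration. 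This contradicts $W\subseteq Y_{\mathrm{final}}$. Thus a single pass suffices, because tests use the fixed original violation hypergraph and the active set decreases monotonically, and $(Y,\rho)$ is core-free.
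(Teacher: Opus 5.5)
Your proof is correct and follows essentially the same argument as the paper. The cost invariant holds because $\lambda>0$ and only vertices of $W$ change cost, and core-freeness follows by contradiction: a surviving core was active and violating when its labeling was enumerated, so it would have been used and lost a vertex. Your explicit induction over iterations and the monotonicity $Y_{\mathrm{final}}\subseteq Y_t$ simply formalize the paper's brief version.
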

\begin{proof}
The assertions about costs hold initially. In a used core, $c$ is positive
on $W\subseteq Y$, so the minimum in~\eqref{eq:scaling} is positive. The
update leaves all residual costs nonnegative and gives zero cost to at least
one vertex of $W$. No other point changes cost. Moving exactly those new
zero-cost vertices to $D$ maintains the invariant and removes at least one
active point.

Suppose a core survived the full enumeration. Consider any labeling of that
core encountered in the enumeration. All of its vertices survived to the
end and were therefore active at that earlier time. Its triples were
violating then, since their distances never change. The algorithm would have
used that candidate and removed one of its vertices, a contradiction.
\end{proof}

\Needspace{8\baselineskip}
\begin{lemma}[Feasibility]
\label{lem:feasibility}
The set returned by \Cref{alg:main} is a feasible ultrametric deletion set.
\end{lemma}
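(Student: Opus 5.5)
The plan is to reduce feasibility of the returned set $A=D\cup\bigcup_i B_i$ to the decomposition of \Cref{thm:decomposition}, applied to the residual metric $(Y,\rho)$ rather than to $X$. By \Cref{lem:invariant}, at the end of the enumeration $(Y,\rho)$ is core-free. By~\eqref{eq:heredity}, its violation hypergraph is $\HH_Y=\HH_X[Y]$. The sets $V_1,\ldots,V_m$ computed by the algorithm are exactly the nontrivial components of $\HH_Y$. Moreover, \Cref{cor:common-scale} applied to $(Y,\rho)$ shows that the scale of any single violating triple in $V_i$ equals the common scale $r_i$. Hence the graphs $G_i$ built in the second loop coincide with the graphs $G_{r_i}[V_i]$ of \Cref{thm:decomposition} for the instance $(Y,\rho)$.

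Next, \Cref{thm:cvd} guarantees that each $G_i-B_i$ is a cluster graph. Some vertices of $V_i$ might have zero residual cost when $\cvd$ is invoked; by \Cref{lem:invariant} this cannot happen, since $\cost$ is strictly positive on $Y$. In any case $\cvd$ returns a feasible $B_i$ for arbitrary nonnegative weights. Setting $A'=\bigcup_i B_i\subseteq Y$, we have $A'\cap V_i=B_i$ because the $V_i$ are disjoint. By~\eqref{eq:feasibility-equality} for $(Y,\rho)$, the space $(Y\setminus A',\rho)$ is an ultrametric.

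Finally, $X\setminus A = Y\setminus A'$ because $D=X\setminus Y$. Therefore $(X\setminus A,\rho)$ is an ultrametric, and $A$ is a feasible deletion set by~\eqref{eq:opt-definition}.

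The only delicate point is checking that \Cref{thm:decomposition} may legitimately be applied to the subspace $Y$ with the algorithm's choice of $r_i$. This needs core-freeness of $Y$, which comes from \Cref{lem:invariant}, and the identity $\HH_Y=\HH_X[Y]$, so that components and scales computed from the original distance matrix restricted to $Y$ are the right ones. Everything else is bookkeeping.
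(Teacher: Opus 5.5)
Your proposal is correct and follows the paper's proof. Both arguments take core-freeness of $Y$ from \Cref{lem:invariant}, well-defined scales from \Cref{cor:common-scale}, cluster-graph outputs from \Cref{thm:cvd}, and \Cref{thm:decomposition} applied to $(Y,\rho)$, and conclude from $X\setminus A=Y\setminus\bigcup_i B_i$; you only spell out the bookkeeping (heredity, disjointness of the $V_i$, applying the decomposition to the subspace) that the paper leaves implicit.
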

\begin{proof}
The remaining instance $Y$ is core-free by \Cref{lem:invariant}. The scales
chosen in the final phase are consequently well defined by
\Cref{cor:common-scale}. Each $G_i-B_i$ is a cluster graph by
\Cref{thm:cvd}, so \Cref{thm:decomposition} implies that
$Y\setminus\bigcup_iB_i$ is ultrametric. This is exactly the set left after
deleting $D\cup\bigcup_iB_i$ from $X$.
\end{proof}

\subsection{Charging the original weights}

Let $s$ be the number of used cores. For $j=1,\ldots,s$, let $c_j$ be the
local vector used in the $j$th reduction, extended by zero to $X$, and let
$\lambda_j>0$ be its scaling coefficient. Write $Y_*$ for the final active
set and $\cost_*$ for its residual costs, extended by zero to $X\setminus Y_*$.

\begin{lemma}[Cost decomposition and lower bound]
\label{lem:cost-decomposition}
The vectors produced by the algorithm satisfy
\begin{equation}
\label{eq:cost-decomposition}
 w=\cost_*+\sum_{j=1}^{s}\lambda_jc_j.
\end{equation}
Moreover,
\begin{equation}
\label{eq:lower-bound}
 \OPTu(X,w)\ge
 2\sum_{j=1}^{s}\lambda_j+\OPTu(Y_*,\cost_*|_{Y_*}).
\end{equation}
\end{lemma}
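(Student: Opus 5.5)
The plan is to prove \eqref{eq:cost-decomposition} by induction over the reduction phase, and then obtain \eqref{eq:lower-bound} from the superadditivity of the transversal minimum under sums of cost vectors.

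\textbf{The decomposition \eqref{eq:cost-decomposition}.} We track the invariant $w=\cost+\sum_{j\le t}\lambda_jc_j$ after the $t$th used core, with $\cost$ extended by zero to $D$. It holds at $t=0$, since vertices of zero weight start in $D$ with $\cost=w=0$ there. Each reduction replaces $\cost$ by $\cost-\lambda_tc_t$ on $W$. By \Cref{lem:invariant}, moving a vertex into $D$ happens only when its residual cost is $0$, so setting $\cost_*=0$ off $Y_*$ loses nothing. Unused candidates change nothing.

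\textbf{The lower bound \eqref{eq:lower-bound}.} Fix an optimal feasible set $A^*$ for $(X,w)$. By \eqref{eq:cost-decomposition},
\[
 w(A^*)=\cost_*(A^*)+\sum_j\lambda_jc_j(A^*).
\]
We bound the two parts separately.
\begin{itemize}
\item $A^*$ is a transversal of $\HH_X$, so by \Cref{lem:core-costs} we have $c_j(A^*)\ge2$ for each $j$. This step uses only the three displayed triples of the $j$th core, and these lie in $\HH_X$ because the algorithm checked them against the original distance matrix.
\item $\cost_*$ vanishes outside $Y_*$, so $\cost_*(A^*)=\cost_*(A^*\cap Y_*)$. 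By heredity \eqref{eq:heredity}, $A^*\cap Y_*$ is a transversal of $\HH_{Y_*}=\HH_X[Y_*]$, so $(Y_*\setminus A^*,\rho)$ is ultrametric. Hence $\cost_*(A^*\cap Y_*)\ge\OPTu(Y_*,\cost_*|_{Y_*})$.
\end{itemize}
Summing the two bounds and using $\lambda_j>0$ gives the claim.

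The only delicate point is bookkeeping: each $c_j$ must be read as a vector on all of $X$, zero outside its core. Then the bound $c_j(A^*)\ge2$ holds for the full set $A^*$, including vertices that were later deleted, rather than only on the active set at time $j$. Once this is stated, the rest is routine.
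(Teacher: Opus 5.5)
Your proof is correct and follows the paper's argument essentially step for step. You telescope the updates $\cost\gets\cost-\lambda_jc_j$, using that removed vertices carry zero residual cost, to get \eqref{eq:cost-decomposition}; you then split $w(K)$ into $\sum_j\lambda_jc_j(K)\ge2\sum_j\lambda_j$ and $\cost_*(K\cap Y_*)\ge\OPTu(Y_*,\cost_*|_{Y_*})$, with the only cosmetic difference being that you fix an optimal $A^*$ where the paper takes an arbitrary feasible $K$ and minimizes at the end.
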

\begin{proof}
Each update subtracts exactly $\lambda_jc_j$ from the current vector. A
point is removed only when its residual cost is zero, so extending the final
vector by zero does not discard any unaccounted cost. Telescoping the updates
proves~\eqref{eq:cost-decomposition}.

Let $K$ be any feasible deletion set for the original instance. The three
triples of every used core are original violations. Thus
$K$ meets all three and $c_j(K)\ge2$ by \Cref{lem:core-costs}. Also,
$K\cap Y_*$ is feasible for the induced metric on $Y_*$, so
\[
 \cost_*(K)=\cost_*(K\cap Y_*)
 \ge\OPTu(Y_*,\cost_*|_{Y_*}).
\]
Applying~\eqref{eq:cost-decomposition} to $K$ gives
\[
 w(K)\ge2\sum_j\lambda_j+\OPTu(Y_*,\cost_*|_{Y_*}).
\]
Minimizing over $K$ proves~\eqref{eq:lower-bound}.
\end{proof}

\Needspace{14\baselineskip}
\begin{theorem}[Approximation guarantee]
\label{thm:approximation}
The set $A$ returned by \Cref{alg:main} satisfies
$w(A)\le(5/2)\OPTu(X,w)$.
\end{theorem}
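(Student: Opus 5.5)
The proof is a local ratio argument built from the cost decomposition in \Cref{lem:cost-decomposition}. Write the returned set as $A=D\cup B$, where $B=\bigcup_i B_i$. We split $w(A)$ according to~\eqref{eq:cost-decomposition}:
\[
 w(A)=\cost_*(A)+\sum_{j=1}^{s}\lambda_j c_j(A).
\]
Each summand of the second term is at most $5\lambda_j$, since $c_j(X)=5$ by \Cref{lem:core-costs} and $c_j\ge 0$. The second term is therefore at most $\frac52\cdot 2\sum_j\lambda_j$.

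For the first term, note that $\cost_*$ vanishes on $D=X\setminus Y_*$ by \Cref{lem:invariant} and the zero extension. Hence $\cost_*(A)=\cost_*(B)=\sum_i\cost_*(B_i)$. The final active set $Y_*$ is core-free by \Cref{lem:invariant}. Its violation hypergraph is $\HH_{Y_*}=\HH_X[Y_*]$ by~\eqref{eq:heredity}, and its nontrivial components are exactly the $V_i$ used in the algorithm. We may therefore apply \Cref{cor:corefree-two}, which is \Cref{thm:cvd} combined with~\eqref{eq:opt-decomposition}, to the instance $(Y_*,\cost_*|_{Y_*})$. This gives
\[
 \cost_*(B)\le 2\,\OPTu(Y_*,\cost_*|_{Y_*})\le \tfrac52\,\OPTu(Y_*,\cost_*|_{Y_*}).
\]
One small point needs checking. \Cref{thm:cvd} is stated for nonnegative weights, and $\cost_*$ is strictly positive on $Y_*$, so the remark about zero-weight vertices is not even needed. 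The sets $B_i$ are subsets of $V_i\subseteq Y_*$, so no returned vertex outside $Y_*$ is charged through $\cost_*$.

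Adding the two bounds gives
\[
 w(A)\le\tfrac52\Bigl(2\sum_j\lambda_j+\OPTu(Y_*,\cost_*|_{Y_*})\Bigr),
\]
and \eqref{eq:lower-bound} bounds the parenthesised quantity by $\OPTu(X,w)$, which completes the argument.

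I expect no serious obstacle. The only delicate step is confirming that the components and scales computed in the final phase are those of the core-free instance $Y_*$, so that \Cref{thm:decomposition} applies with the residual weights rather than the original ones. This follows from \Cref{lem:invariant}, \eqref{eq:heredity} and \Cref{lem:feasibility}. Everything else is bookkeeping.
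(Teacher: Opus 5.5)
Your proposal is correct and follows the paper's proof essentially step for step. You split $w(A)$ via \eqref{eq:cost-decomposition}, bound $\cost_*(A)\le 2\,\OPTu(Y_*,\cost_*|_{Y_*})$ by applying \Cref{cor:corefree-two} to the core-free residual instance, bound $\lambda_j c_j(A)\le 5\lambda_j$, and conclude with \eqref{eq:lower-bound}. Your added checks (that $\cost_*$ vanishes on $D$ and is positive on $Y_*$, and that the computed $V_i$ are the components of $\HH_{Y_*}$) are bookkeeping the paper leaves implicit.
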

\begin{proof}
Let $R=\OPTu(Y_*,\cost_*|_{Y_*})$. The points deleted during the reduction
phase have zero residual cost. By \Cref{cor:corefree-two}, the terminal phase
therefore satisfies $\cost_*(A)\le2R$. For each local vector,
$c_j(A)\le c_j(X)=5$. Consequently,
\begin{align}
 w(A)
 &=\cost_*(A)+\sum_j\lambda_jc_j(A) \notag\\
 &\le 2R+5\sum_j\lambda_j \notag\\
 &\le\frac52\left(R+2\sum_j\lambda_j\right) \notag\\
 &\le\frac52\,\OPTu(X,w), \label{eq:final-bound}
\end{align}
where the final step is \Cref{lem:cost-decomposition}.
\end{proof}

The loss from $2$ to $5/2$ comes entirely from the local core reductions;
once the residual instance is core-free, its exact decomposition into
disjoint weighted CVD instances gives a factor-$2$ approximation.

\subsection{Running time and exact arithmetic}

\begin{lemma}[Running time]
\label{lem:runtime}
\Cref{alg:main} can be implemented with $O(n^5)$ arithmetic and comparison
operations. Its bit complexity is polynomial in the binary input length.
\end{lemma}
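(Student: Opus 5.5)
We account for the three phases of \Cref{alg:main} separately and then bound the sizes of the numbers that appear. In the \textbf{initialization}, placing the zero-weight vertices in $D$ takes $O(n)$ operations. We then precompute a Boolean table recording, for every triple in $\binom X3$, whether it is violating, together with its scale. Each triple needs $O(1)$ comparisons of its three distances, so this takes $O(n^3)$ operations in total. The same scan builds the auxiliary graph joining the vertices of each violation, but this must be redone on $Y$ after the reductions. We therefore postpone it and recompute on $Y_*$, again at cost $O(n^3)$, since membership in $\HH_Y$ is a table lookup by \eqref{eq:heredity}.

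The \textbf{local-reduction phase} is a single pass over the injective labelings of the two templates. There are at most $n^4$ labelings for Type~I and $n^5$ for Type~II, giving $O(n^5)$ in all. For each labeling we test $W\subseteq Y$ with a membership array in $O(1)$ and test the three triples by table lookup in $O(1)$. When the candidate is used, computing $\lambda$, updating $\cost$ on at most five vertices, and moving zero-cost vertices from $Y$ to $D$ take $O(1)$ operations. As \Cref{lem:invariant} establishes, a single enumeration suffices, so the phase costs $O(n^5)$.

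The \textbf{terminal phase} first finds the components $V_i$ of $\HH_{Y_*}$ by union--find or BFS over the auxiliary graph, in $O(n^3)$ operations. Each $r_i$ is read off one stored violating triple. Each $G_i$ is built by a pair scan in $O(|V_i|^2)$. By \Cref{thm:cvd} the CVD calls cost $O(\sum_i|V_i|^4)\le O(n^4)$, because the $V_i$ are disjoint. Zero-weight vertices never occur here, since $\cost$ is positive on $Y_*$ by \Cref{lem:invariant}. Summing the three phases gives $O(n^5)$.

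The main obstacle is \textbf{bit complexity}. Repeated subtraction of $\lambda_j c_j$ could in principle blow up the denominators. The plan is to show that every residual cost is a rational of controlled size. Each update either zeroes a vertex or subtracts from it $\lambda c(v)$ with $c(v)\in\{1,2\}$ and $\lambda=\cost(u)/c(u)$ for some $u\in W$. Let $L$ be the least common denominator of the input weights. By induction, every residual cost lies in $\frac1{2^{k}L}\mathbb Z$, where $k$ counts the Type~I reductions applied so far; the factor $2$ enters only when a weight is divided by $c(a)=2$. This crude bound allows $k$ to be as large as $\Theta(n^5)$, which is still polynomial, but I would try to sharpen it. The first attempt is to note that each used core removes a vertex, so $s\le n$. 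Hence there are at most $n$ reductions, every residual value lies in $\frac{1}{2^{n}L}\mathbb Z$, and its numerator is bounded by $2^{n}L\max_v w(v)$. All of these have bit length polynomial in the input, and the arithmetic inside the CVD subroutine is polynomial by \Cref{thm:cvd}. Comparisons of distances involve only input rationals. Together these give polynomial bit complexity.
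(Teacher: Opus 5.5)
Your proposal is correct and follows essentially the paper's argument: an $O(n^5)$ single pass over the template labelings with $O(1)$ work per candidate, $O(n^3)$ work to recover the residual components and threshold graphs, $O(\sum_i|V_i|^4)\le O(n^4)$ for the CVD calls, and polynomial bit length because the residual weights stay in $\frac{1}{2^{s}Q}\mathbb Z$, never increase, and at most $s\le n$ reductions occur. The differences are cosmetic: you precompute a violation table instead of streaming, use the least common denominator instead of the product of denominators, and charge the factor $2$ only to Type~I reductions.
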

\begin{proof}
Enumerating injective labelings of the four- and five-vertex templates takes
$O(n^5)$ iterations. Each candidate requires only a constant number of
active-set checks and distance comparisons. A successful reduction updates
at most five residual weights, using $O(1)$ arithmetic operations.
At most $n$ reductions are performed, by \Cref{lem:invariant}, and all
candidates are streamed in a single pass. To find the residual components,
scan all triples of $Y$ and record the incident pairs of each violation in a
Boolean adjacency matrix. This takes $O(n^3)$ comparisons and $O(n^2)$
auxiliary space. A second scan of these triples records one
witness scale per component, and constructing all threshold graphs takes
$O(n^2)$ further comparisons. With $n_i=|V_i|$ and $\sum_i n_i\le n$, the
CVD calls from \Cref{thm:cvd} use
\[
 O\left(\sum_i n_i^4\right)\le O(n^4)
\]
arithmetic operations. Thus the core enumeration dominates the running time.

For bit complexity, distances are only compared as original input rationals;
each comparison is performed exactly by cross-multiplication.
Let $L$ be the total binary
length of the weights and $Q$ the product of their positive denominators;
$Q$ has $O(L)$ bits. Initially every residual weight is an integer multiple
of $1/Q$. Since each nonzero coefficient of a local vector is $1$ or $2$,
after $j$ reductions every residual weight is an integer multiple of
$1/(2^jQ)$. The weights never increase and at most $n$ reductions occur,
so their numerators and denominators have $O(L+n)$ bits. The CVD calls
therefore receive rational inputs of polynomial bit length. Equivalently,
multiplying all terminal weights by $2^sQ$, where $s\le n$ is the number of
reductions, gives nonnegative integral weights of $O(L+n)$ bits without
changing approximation ratios. The polynomial time guarantee in
\Cref{thm:cvd}, together with the exact comparisons and local updates,
proves the claim.
\end{proof}

Together, \Cref{lem:feasibility}, \Cref{thm:approximation}, and
\Cref{lem:runtime} prove \Cref{thm:main}.

\bibliography{references}

@article{FanRB22,
  author       = {Chenglin Fan and
                  Benjamin Raichel and
                  Gregory Van Buskirk},
  title        = {Metric Violation Distance: Hardness and Approximation},
  journal      = {Algorithmica},
  volume       = {84},
  number       = {5},
  pages        = {1441--1465},
  year         = {2022},
}

@article{aprile2023,
author       = {Manuel Aprile and
                  Matthew Drescher and
                  Samuel Fiorini and
                  Tony Huynh},
  title        = {A tight approximation algorithm for the cluster vertex deletion problem},
  journal      = {Math. Program.},
  volume       = {197},
  number       = {2},
  pages        = {1069--1091},
  year         = {2023},
}

@article{localratio2004,
author       = {Reuven Bar{-}Yehuda and
                  Keren Bendel and
                  Ari Freund and
                  Dror Rawitz},
  title        = {Local ratio: {A} unified framework for approxmation algorithms in memoriam:
                  {Shimon Even} 1935-2004},
  journal      = {{ACM} Comput. Surv.},
  volume       = {36},
  number       = {4},
  pages        = {422--463},
  year         = {2004},
}

@inproceedings{bentert2025,
 author       = {Matthias Bentert and
                  Fedor V. Fomin and
                  Petr A. Golovach and
                  M. S. Ramanujan and
                  Saket Saurabh},
  title        = {When Distances Lie: Euclidean Embeddings in the Presence of Outliers
                  and Distance Violations},
  booktitle    = {41st International Symposium on Computational Geometry, SoCG 2025,
                  Kanazawa, Japan, June 23-27, 2025},
  series       = {LIPIcs},
  volume       = {332},
  pages        = {15:1--15:16},
  publisher    = {Schloss Dagstuhl - Leibniz-Zentrum f{\"{u}}r Informatik},
  year         = {2025},
}

@article{bollobas1974,
 author       = {B{\'{e}}la Bollob{\'{a}}s},
  title        = {Three-graphs without two triples whose symmetric difference is contained
                  in a third},
  journal      = {Discret. Math.},
  volume       = {8},
  number       = {1},
  pages        = {21--24},
  year         = {1974},
}

@inproceedings{charikar2024,
 author       = {Moses Charikar and
                  Ruiquan Gao},
  title        = {Improved Approximations for Ultrametric Violation Distance},
  booktitle    = {Proceedings of the 2024 {ACM-SIAM} Symposium on Discrete Algorithms,
                  {SODA} 2024, Alexandria, VA, USA, January 7-10, 2024},
  pages        = {1704--1737},
  publisher    = {{SIAM}},
  year         = {2024},
}

@inproceedings{chawla2024,
 author       = {Shuchi Chawla and
                  Kristin Sheridan},
  title        = {Composition of nested embeddings with an application to outlier removal},
  booktitle    = {Proceedings of the 2024 {ACM-SIAM} Symposium on Discrete Algorithms,
                  {SODA} 2024, Alexandria, VA, USA, January 7-10, 2024},
  pages        = {1641--1668},
  publisher    = {{SIAM}},
  year         = {2024},
}

@inproceedings{chawla2026,
 author       = {Shuchi Chawla and
                  Arnold Filtser and
                  Kristin Sheridan and
                  Yonatan Trachtenberg},
  title        = {Bi-Lipschitz Extensions and Outlier Embeddings into Trees},
  booktitle    = {Approximation, Randomization, and Combinatorial Optimization. Algorithms
                  and Techniques, {APPROX} , and {RANDOM} },
  series       = {LIPIcs},
  volume       = {392},
  pages        = {17:1--17:24},
  publisher    = {Schloss Dagstuhl - Leibniz-Zentrum f{\"{u}}r Informatik},
  year         = {2026},
}

@article{cohenaddad2025,
  author       = {Vincent Cohen{-}Addad and
                  Chenglin Fan and
                  Euiwoong Lee and
                  Arnaud de Mesmay},
  title        = {Fitting Metrics and Ultrametrics with Minimum Disagreements},
  journal      = {{SIAM} J. Comput.},
  volume       = {54},
  number       = {1},
  pages        = {92--133},
  year         = {2025},
}

@article{fiorini2020,
  author       = {Samuel Fiorini and
                  Gwena{\"{e}}l Joret and
                  Oliver Schaudt},
  title        = {Improved approximation algorithms for hitting 3-vertex paths},
  journal      = {Math. Program.},
  volume       = {182},
  number       = {1},
  pages        = {355--367},
  year         = {2020},
}

@article{keevashmubayi2004,
 author       = {Peter Keevash and
                  Dhruv Mubayi},
  title        = {Stability theorems for cancellative hypergraphs},
  journal      = {J. Comb. Theory {B}},
  volume       = {92},
  number       = {1},
  pages        = {163--175},
  year         = {2004},
}

@inproceedings{sww2017,
  author = {Sidiropoulos, Anastasios and Wang, Dingkang and Wang, Yusu},
  title = {Metric Embeddings with Outliers},
  booktitle = {ACM-SIAM Symposium on Discrete Algorithms (SODA)},
  pages = {670--689},
  year = {2017}
}
\bibliographystyle{plain}
\appendix
\section{Classification of weighted small-core certificates}
\label{app:classification}

This appendix proves \Cref{prop:classification} for arbitrary $3$-uniform hypergraphs on at most five vertices. Let $S$ be the $2$-shadow of $\HH$ on $W$. We show that $S$ is $K_4$-free and then use a proper three-coloring to bound the minimum transversal cost.

\begin{lemma}
\label{lem:shadow-k4-free}
If $|W|\le5$ and $\HH$ contains neither $\Ffour$ nor $\Ffive$, its
$2$-shadow is $K_4$-free.
\end{lemma}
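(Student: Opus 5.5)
The plan is a proof by contradiction with a short case analysis on how the six pairs of a hypothetical $K_4$ are covered. Suppose the $2$-shadow contains a $K_4$ on $Q=\{p,q,r,s\}\subseteq W$. Each of the six pairs of $Q$ lies in some hyperedge of $\HH$. Since $|W|\le5$, every covering hyperedge is either a triple inside $Q$ or a triple $\{u,v,x\}$ with $u,v\in Q$ and $x$ the unique vertex of $W\setminus Q$, when it exists. The first step uses the remark after \Cref{def:cores}: any three distinct triples on four vertices form a Type-I core. Hence at most two hyperedges lie inside $Q$. Two distinct triples inside $Q$ share exactly two vertices and together cover only $5$ of the $6$ pairs. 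So if $|W|=4$ we already have a contradiction, and for $|W|=5$ some pairs must be covered through $x$.

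We then split according to the number $k\in\{0,1,2\}$ of hyperedges inside $Q$.

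\emph{Case $k=2$.} Relabel so the two triples are $pqr$ and $pqs$. The only uncovered pair is $rs$, so $rsx\in\HH$. With $a=p$, $b=q$, $c=r$, $d=s$, $e=x$, the triples $pqr,pqs,rsx$ are exactly $abc,abd,cde$, which is a Type-II core.

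\emph{Case $k=1$.} Relabel so the triple is $pqr$. The pairs $ps$ and $qs$ are uncovered inside $Q$, so $psx,qsx\in\HH$. With $a=s$, $b=x$, $c=p$, $d=q$, $e=r$, the triples $psx,qsx,pqr$ are $abc,abd,cde$, again a Type-II core.

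\emph{Case $k=0$.} Every pair $uv$ of $Q$ forces $uvx\in\HH$. In particular $pqx,prx,qrx$ are three distinct triples on the four vertices $\{p,q,r,x\}$, which form a Type-I core.

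Every case contradicts the hypothesis that $\HH$ is free of $\Ffour$ and $\Ffive$, so the $2$-shadow is $K_4$-free. The only delicate point is the bookkeeping in the case $k=1$. There one must pick two of the three forced $x$-triples that share the pair $\{s,x\}$ and check that, together with the inside triple, they have empty common intersection and union of size five. That is precisely the Type-II pattern described after \Cref{def:cores}, and I expect no real obstacle beyond this relabeling check.
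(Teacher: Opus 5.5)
Your proof is correct and follows the paper's argument essentially step for step. Both exclude three hyperedges inside the $K_4$ via a Type-I core and split on the number of internal hyperedges. The forced triples through the fifth vertex then give a Type-II core when $k=2$ or $k=1$ and a Type-I core with common vertex $x$ when $k=0$; the paper uses the same labelings.
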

\begin{proof}
Suppose $K=\{a,b,c,d\}$ spans a $K_4$ in the shadow. If $|W|=4$, at least
three of the four possible triples on $K$ must belong to $\HH$: two triples
cover only five of the six pairs. Three triples give a Type-I core. For $|W|=5$, let $e$ be the fifth vertex, and let $m$ be the number of
hyperedges entirely contained in $K$. If $m\ge3$, there is again a Type-I
core. If $m=2$, the two triples can be written as $abc,abd$. The shadow edge
$cd$ cannot be witnessed by an additional triple in $K$, so it forces $cde$.
Thus $abc,abd,cde$ form a Type-II core. If $m=1$, write the unique internal edge as $abc$. The shadow edges $ad$ and
$bd$ force $ade$ and $bde$, respectively. Now $ade,bde,abc$ form a Type-II
core. Finally, if $m=0$, the shadow edges $ab,ac,bc$ force $abe,ace,bce$.
These form a Type-I core with common vertex $e$. Every case is a
contradiction.
\end{proof}

\begin{lemma}
\label{lem:five-three-color}
Every $K_4$-free graph on at most five vertices has a proper three-coloring.
\end{lemma}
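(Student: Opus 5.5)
We prove \Cref{lem:five-three-color} by a short case analysis on the number of vertices. Let $G$ be $K_4$-free on at most five vertices. If $|V(G)|\le3$, give every vertex its own color. If $|V(G)|=4$, the graph is not $K_4$, so some pair $u,v$ is nonadjacent. We give $u$ and $v$ the same color and the other two vertices the two remaining colors. This coloring is proper.

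For $|V(G)|=5$ we reduce to the four-vertex case by finding a color class of size two whose removal leaves a triangle-free, and hence easy, remainder. Concretely, we pick a vertex $v$ of minimum degree.

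\emph{Case $\deg(v)\le2$.} The graph $G-v$ has four vertices and is $K_4$-free, so it has a proper three-coloring by the previous paragraph. The vertex $v$ sees at most two colors, so we extend the coloring to $v$ with the remaining color.

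\emph{Case $\deg(v)\ge3$.} Every vertex has degree at least $3$. Since $G\ne K_5$ (it is $K_4$-free), there is a nonedge $xy$. We merge $x$ and $y$ into one color class, say color $1$. It remains to color the other three vertices $\{p,q,s\}$ properly with colors $2$ and $3$, which is possible exactly when $\{p,q,s\}$ is not a triangle. So the key step is to choose the nonedge $xy$ so that its complement $\{p,q,s\}$ is not a triangle.

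We argue by contradiction. Suppose every nonedge $xy$ has its complementary triple forming a triangle. Take a nonedge $xy$ with complement triangle $pqs$. Each of $x$ and $y$ has degree at least $3$ but is not adjacent to the other, so each is adjacent to all of $p,q,s$. Then $\{x,p,q,s\}$ is a $K_4$, a contradiction.

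The main obstacle is this last case. There the argument must use the structure of five-vertex graphs, rather than a greedy bound, to exclude the situation where every independent pair leaves a triangle behind. The degree-$\ge3$ observation handles it cleanly, because any nonedge whose complement is a triangle is forced to extend to a $K_4$.
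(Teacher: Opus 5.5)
Your proof is correct and follows the paper's argument. You handle at most four vertices and a vertex of degree at most two in the same way, and in the minimum-degree-$\ge 3$ case you use the same coloring: two disjoint nonadjacent pairs plus a singleton. The only difference is how the second nonedge is found. The paper observes that the complement is a matching with at least two edges, while you take any nonedge $xy$ and show directly that $\{p,q,s\}$ cannot be a triangle, since $x$ would complete a $K_4$ (so your contradiction over all nonedges is unnecessary, because every nonedge works).
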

\begin{proof}
A graph on at most three vertices is immediate. A $K_4$-free graph on four
vertices has a nonadjacent pair; give that pair one color and the two other
vertices separate colors. For a graph on five vertices, if a vertex has
degree at most two, remove it, color the remaining four vertices, and extend
the coloring to it. Otherwise every vertex has degree at least three, so the
complement has maximum degree at most one. The complement must contain at
least two edges: if it had at most one, the original graph would contain a
$K_4$. These two edges are disjoint. Use their endpoint pairs as two color
classes and the remaining vertex as the third.
\end{proof}

\begin{proof}[Proof of \Cref{prop:classification}]
If a core occurs, use the vector of \Cref{lem:core-costs}, assigning zero
cost to vertices outside the displayed core. Every transversal of the entire
hypergraph hits its three displayed edges, so the minimum cost is at least
$2$ and the total cost is $5$.

Conversely, suppose no core occurs. By
\Cref{lem:shadow-k4-free,lem:five-three-color}, the $2$-shadow has a proper
three-coloring with classes $W_1,W_2,W_3$. Empty classes are allowed. Every
hyperedge forms a triangle in the shadow and hence uses all three colors.
Thus each $W_i$ meets every hyperedge. For any nonnegative vector $c$,
\[
 \tau_c(\HH)\le\min_{i=1,2,3}c(W_i)\le\frac13c(W).
\]
If $c\ne0$, then $c(W)>0$ and therefore
\[
 \frac52\tau_c(\HH)\le\frac56c(W)<c(W).
\]
No nonzero weighted $5/2$ certificate exists. This also covers hypergraphs
with fewer than four vertices and hypergraphs with no edges.
\end{proof}

\end{document}